\documentclass[a4paper, twoside, reqno, 12pt, dvips]{amsart}
\RequirePackage{fix-cm} 

\usepackage{fixltx2e}     

\usepackage{eucal}
\usepackage{xspace}
\usepackage{amsgen}
\usepackage{amssymb}
\usepackage{amsmath}
\usepackage{amsfonts}
\usepackage{MnSymbol}
\usepackage[nice]{nicefrac}
\usepackage{mathrsfs, dsfont}

\usepackage{a4wide}

\headsep 20pt
\voffset -0.30in
\topmargin 0.45in
\textheight 8.9in
\oddsidemargin 0.0in
\evensidemargin 0.0in

\usepackage{acronym}

\usepackage[section]{placeins}
\usepackage{indentfirst}
\usepackage{graphicx}
\usepackage{psfrag}

\usepackage[usenames, dvipsnames, pdf]{pstricks}
\usepackage{epsfig}
\usepackage{pst-grad} 
\usepackage{pst-plot} 

\usepackage{subfigure}

\usepackage{longtable}

\usepackage[french,english]{babel}
\usepackage[latin1]{inputenc}

\usepackage{color}
\definecolor{oneblue}{rgb}{0,0.0,0.75}
\definecolor{darkgrey}{rgb}{0.273, 0.281, 0.30}

\usepackage[colorlinks,
            urlcolor=oneblue,
            linkcolor=oneblue,
            citecolor=oneblue,
            bookmarksopen=false,
            pagebackref]{hyperref}

\usepackage[compact]{titlesec}

\titleformat{\section}{\normalfont\Large\bfseries\sffamily\center\color{darkgrey}}{\thesection.}{0.5em}{}{}
\titleformat{\subsection}{\normalfont\large\bfseries\sffamily\color{darkgrey}}{\thesubsection.}{0.4em}{}{}
\titleformat{\subsubsection}{\normalfont\normalsize\bfseries\sffamily\color{darkgrey}}{\thesubsubsection.}{0.3em}{}{}

\titlespacing*{\section}{1.0em}{1.0em}{0.8em}[0em]
\titlespacing*{\subsection}{1.0em}{1.0em}{0.8em}[0em]
\titlespacing*{\subsubsection}{1.0em}{0.7em}{0.6em}[0em]

\usepackage{titletoc}

\setcounter{tocdepth}{3}
\contentsmargin{0.0em}
\dottedcontents{section}[2.5em]{\addvspace{0.45em}\bfseries}{1.0em}{0pc}
\dottedcontents{subsection}[4.5em]{}{2.6em}{0pc}
\dottedcontents{subsubsection}[5.5em]{}{3.0em}{0pc}

\usepackage{fancyhdr}
\usepackage{lastpage}

\newcommand*\Title{Generation of water waves by moving bottom disturbances}
\newcommand*\Authors{H.~Nersisyan, D.~Dutykh \& E.~Zuazua}

\pagestyle{fancy}
\fancyhf{}
\fancyhead[RE,RO]{{\thepage} / \pageref{LastPage}}
\fancyhead[LO]{\sl\Title}
\fancyhead[LE]{\sl\Authors}

\addtolength{\headheight}{-0.5pt}
\addtolength{\footskip}{-0.5pt}

\vfuzz2pt 
\hfuzz2pt 

\numberwithin{equation}{section}

\newtheorem{remark}{Remark}
\newtheorem{theorem}{Theorem}


\newcommand{\R}{\mathbb{R}}
\newcommand{\T}{\mathbb{T}}
\newcommand{\Z}{\mathbb{Z}}
\newcommand{\I}{\mathcal{I}}
\newcommand{\N}{\mathcal{N}}
\newcommand{\ui}{\mathrm{i}}

\newcommand{\eps}{\varepsilon}

\newcommand{\ud}{\mathrm{d}\/}
\newcommand{\teta}{\tilde{\eta}}




\DeclareMathOperator{\sech}{sech}

\newcommand{\supp}{\mathop{\mathrm{supp}}\nolimits}

\newcommand{\od}[2]{\frac{\mathrm{d}\,#1}{\mathrm{d}\/#2}}


\begin{document}

\title[\Title]{Generation of two-dimensional water waves by moving bottom disturbances}

\author[H.~Nersisyan]{Hayk Nersisyan}
\address{BCAM - The Basque Center for Applied Mathematics, Alameda Mazarredo 14, 48009 Bilbao, Basque Country -- Spain}
\email{hnersisyan@bcamath.org}
\urladdr{http://www.bcamath.org/en/people/nersisyan/}

\author[D.~Dutykh]{Denys Dutykh$^*$}
\address{University College Dublin, School of Mathematical Sciences, Belfield, Dublin 4, Ireland \and LAMA, UMR 5127 CNRS, Universit\'e de Savoie, Campus Scientifique, 73376 Le Bourget-du-Lac Cedex, France}
\email{Denys.Dutykh@univ-savoie.fr}
\urladdr{http://www.denys-dutykh.com/}
\thanks{$^*$ Corresponding author}

\author[E.~Zuazua]{Enrique Zuazua}
\address{BCAM - The Basque Center for Applied Mathematics, Alameda Mazarredo 14, 48009 Bilbao, Basque Country -- Spain \and Ikerbasque, Basque Foundation for Science  Alameda Urquijo 36-5, Plaza Bizkaia 48011, Bilbao, Basque Country, Spain}
\email{zuazua@bcamath.org}
\urladdr{http://www.bcamath.org/en/people/zuazua/}

\begin{abstract}
We investigate the potential and limitations of the wave generation by disturbances moving at the bottom. More precisely, we assume that the wavemaker is composed of an underwater object of a given shape which can be displaced according to a prescribed trajectory. We address the practical question of computing the wavemaker shape and  trajectory generating a wave with prescribed characteristics. For the sake of simplicity we model the hydrodynamics by a generalized forced Benjamin--Bona--Mahony (BBM) equation. This practical problem is reformulated as a constrained nonlinear optimization problem. Additional constraints are imposed in order to fulfill various practical design requirements. Finally, we present some numerical results in order to demonstrate the feasibility and performance of the proposed methodology.
\end{abstract}

\keywords{wave generation; moving bottom; BBM equation; optimization}

\maketitle

\tableofcontents

\section{Introduction}

The problem of wave generation is complex and has many practical applications. On the scale of a laboratory wave tank a traditional wavemaker is composed of numerous paddles attached to a vertical wall and which can move independently according to a prescribed program. These wavemakers have been successfully used to conduct laboratory experiments at least since late 60's \cite{Feir1967, Lake1977}.

In this study we investigate theoretically and numerically the potential for practical applications of a different kind of wave making devices. Namely, the mechanism considered hereinbelow is composed mainly of an underwater object which can be displaced along a portion of the bottom with the prescribed trajectory. In mathematical terms, we study the wave excitation problem by moving forcing at the bottom. Similar processes are known in physics under the name of autoresonance phenomena, thoroughly studied by L.~\textsc{Friedland} and his collaborators \cite{Friedland1998a, Friedland1998, Friedland2003}.

Recently, this type of wavemakers has found an interesting application to the man-made surfing facilities \cite{InstantSport2012}. The device was proven to be successful to generate high quality waves for surfing far from the Oceans. Our main goal consists in providing some elements of the modelling and theoretical analysis of this process. The second objective of this study is to provide an efficient computational procedure to determine the underwater object shape and trajectory to generate a prescribed wave profile in a given portion of the wave tank.

\begin{figure}
  \centering
  \includegraphics[width=0.99\textwidth]{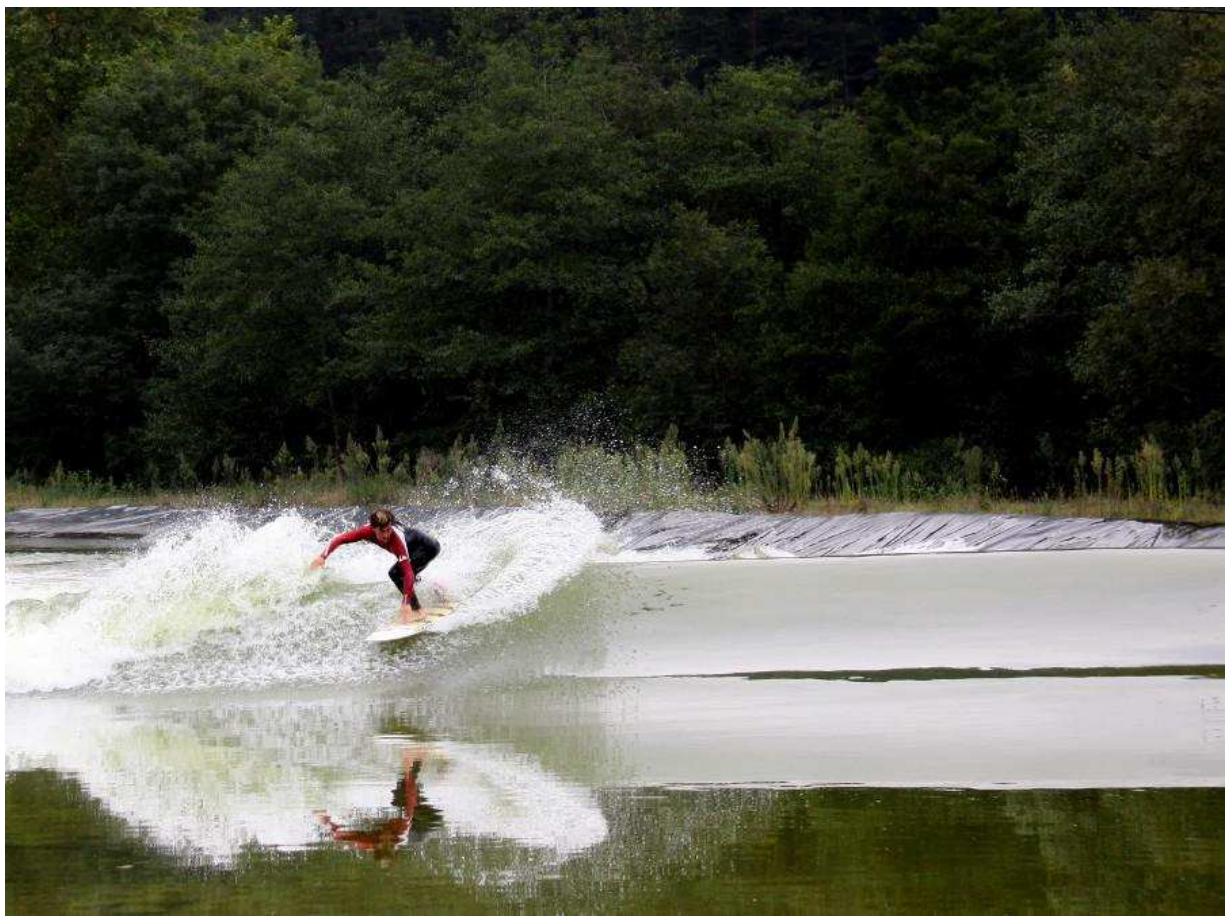}
  \caption{\small\em An artificial wave generated in a pool by an underwater wave making device. \copyright\ \url{http://www.wavegarden.com/}}
  \label{fig:surf}
\end{figure}

The problem of wave generation by moving bottom has been particularly studied in the context of tsunami waves genesis. These extreme waves are caused by sea bed displacements due to an underwater earthquake \cite{Hammack, bradd, Nosov2001, Dutykh2007b, Dutykh2010a} or a submarine landslide \cite{Ward, Okal2003, Watts2003, Beisel2011}. It is mainly the vertical bottom motion which contributes most to the tsunami generation by earthquakes, even if some effort has been made to take into account also for the horizontal displacement components \cite{Tanioka1996, Todo, todo2, Nosov2009, Dutykh2010d}. On the other hand, the wave making mechanism studied here involves \emph{only} the horizontal motion. Consequently, the methods and known results from the tsunami wave community cannot be directly transposed to this problem.

The wave propagation takes place in a shallow channel, so the long wave assumption can be adopted \cite{Ursell1953, Craig1994}. However, weak dispersive and weak nonlinear effects should be included since the resulting wave observed in experiments has some common characteristics with a solitary wave. Consequently, as the base model we choose the classical Boussinesq system derived by D.H.~\textsc{Peregrine} (1967) \cite{Peregrine1967} and generalized later by T.~\textsc{Wu} (1987) \cite{Wu1987}, who included the time-dependent bathymetry effects. In order to simplify further the problem, we assume the wave propagation to be unidirectional and, hence, we derive a generalized forced Benjamin--Bona--Mahony (BBM) equation \cite{bona}. This equation is then discretized with a high order finite volume method \cite{Benkhaldoun2008, Dutykh2011e, ChazelLannes2010, Dutykh2010e}. Finally, the trajectory and the shape of the underwater wavemaker are optimized in order to minimize a cost-function under some practical constraints.

From the mathematical point of view our formulation can be seen as a controllability problem for the forced BBM equation \cite{Pontryagin1987}. There is an extensive literature on the controllability of dispersive wave equations such as KdV \cite{Boussinesq1877, KdV}, BBM \cite{bona} and some Boussinesq-type systems \cite{Bona2004}.
 For instance, the controllability and stabilizability of the KdV equation
\begin{equation*}
  u_t + u_{xxx} + u_x + uu_x = 0, \qquad x \in [0,L], t>0
\end{equation*}
was addressed by L.~\textsc{Russell} \& B.-Y.~\textsc{Zhang} (1993) \cite{Russell1993} for periodic boundary conditions with an internal control. The boundary control was investigated by the same authors later \cite{Russell1995}. The controllability of the KdV equation with Dirichlet boundary conditions was studied in the following papers \cite{Russell1996, Rosier1997, Zhang1999, Perla-Menzala2002, Coron2004, Rosier2004,   Cerpa2007, Glass2008, Cerpa2009, Glass2010, Laurent2010}, the list of references not being exhaustive.
On the other hand, \textsc{Rosier} \& \textsc{Zhang} (2012) \cite{Rosier2012} proved the Unique Continuation Property (UCP) for the solution of BBM equation on a one dimensional torus $\T:= \R/ 2\pi\Z$   with small enough initial
data from $H^1(\T)$  with nonnegative mean values:
\begin{align}
  u_t - u_{txx} + u_x + uu_x = 0,  \quad  u(0)=u_0, \label{eq:roBBM}\\
 \int _ \T u_0(x) \ud x  \ge 0, \quad || u_0||_\infty <3,
\end{align}
i.e. for any open nonempty set $\omega\subset \T$ the only solution of \eqref{eq:roBBM} with
\begin{equation*}
  u(x,t) = 0 \text{ for } (x,t)\in \omega\times (0,T)
\end{equation*}
is the trivial solution $u = 0$. They also considered the following control problem, with a moving control,
\begin{align*}
  u_t - u_{txx} + u_x + uu_x = a(x + ct)h(x, t),
\end{align*}
where $a\in C^\infty$ is given and $h(x,t)$ is the control and proved local  exact controllability in $H^s(\T)$ for any $s \ge 0$ and global exact  controllability in $H^s(\T)$ for any $s \ge 1$. A necessary and sufficient algebraic condition for approximate controllability of the BBM equation with homogeneous Dirichlet boundary conditions was given in \cite{Adames2008}. The controllability of the linearized BBM and KdV equations was studied in \cite{Rosier2000, Micu2001, Zhang2003}. The controllability of a family of Boussinesq equations was studied theoretically as well, see \cite{Zhang2009}. The  controllability of the heat and wave equations with a moving boundary was also recently addressed in \cite{Touboul2012}.

The present paper is organized as follows. In Section \ref{sec:model} we derive the governing equation. Then, this model is analysed mathematically in Section \ref{sec:well}. The optimization is discussed in Section \ref{sec:opt}.
The results of some numerical simulations are presented in Section \ref{sec:num}. Finally, in Section \ref{sec:concl} we outline the main conclusions of this study.

\section{The mathematical model}\label{sec:model}

Consider an ideal incompressible fluid of constant density in a two-dimensional domain. The horizontal independent variable is denoted by $x$ and the upward vertical one by $y$. The origin of the Cartesian coordinate system is chosen such that the line $y=0$ corresponds to the still water level. The fluid is bounded below by an impermeable bottom at $y = -h (x,t)$ and above by an impermeable free surface at $y = \eta (x,t)$. We assume that the total depth $H(x, t) \equiv h (x,t) + \eta (x,t)$ remains positive $H (x,t) \geqslant H_0>0$ at all times $t$. The sketch of the physical domain is shown in Figure \ref{fig:sketch}. The depth-averaged horizontal velocity is denoted by $u(x,t)$ and the gravity acceleration by $g$. The fluid layer has the uniform depth $d$ everywhere, which is perturbed only by a localized object, which can move along the bottom:
\begin{equation}\label{eq:topogr}
  h(x,t) = d - \zeta(x,t), \quad \zeta(x,t) = \zeta_0(x - x_0(t)),
\end{equation}
where the function $\zeta_0(x)$ has a compact support and $x = x_0(t)$ is the trajectory of its barycenter. The meaning of the segment $\I = [a, b]$ is explained in Section \ref{sec:opt}.

\begin{figure}
\centering
\scalebox{1} 
{
\begin{pspicture}(0,-2.32)(11.381875,2.32)
\definecolor{color22g}{rgb}{0.8549019607843137,0.6078431372549019,0.5725490196078431}
\definecolor{color22f}{rgb}{0.4588235294117647,0.15294117647058825,0.10980392156862745}
\definecolor{color87b}{rgb}{0.40784313725490196,0.1411764705882353,0.1411764705882353}
\definecolor{color132}{rgb}{0.13333333333333333,0.18823529411764706,0.8549019607843137}
\psline[linewidth=0.03cm,linestyle=dashed,dash=0.16cm 0.16cm,arrowsize=0.05291667cm 2.0,arrowlength=1.4,arrowinset=0.4]{->}(0.0,0.88)(11.3,0.88)
\psline[linewidth=0.04cm,linestyle=dashed,dash=0.16cm 0.16cm,arrowsize=0.05291667cm 2.0,arrowlength=1.4,arrowinset=0.4]{<-}(5.4,2.3)(5.38,-2.3)
\psframe[linewidth=0.04,dimen=outer,fillstyle=gradient,gradlines=2000,gradbegin=color22g,gradend=color22f,gradmidpoint=1.0](10.78,-1.56)(0.46,-1.88)
\usefont{T1}{ptm}{m}{n}
\rput(11.081407,0.585){$x$}
\usefont{T1}{ptm}{m}{n}
\rput(5.1314063,2.105){$y$}
\psarc[linewidth=0.044,linestyle=dashed,dash=0.16cm 0.16cm,fillcolor=color87b](3.15,-1.55){0.39}{0.0}{180.0}
\psline[linewidth=0.044,linestyle=dashed,dash=0.16cm 0.16cm](3.54,-1.55)(2.76,-1.55)
\psline[linewidth=0.04cm,linestyle=dashed,dash=0.16cm 0.16cm,arrowsize=0.05291667cm 2.0,arrowlength=1.4,arrowinset=0.4]{->}(3.48,-1.34)(4.3,-1.34)
\psarc[linewidth=0.044,fillstyle=solid,fillcolor=color87b](7.29,-1.55){0.39}{0.0}{180.0}
\psline[linewidth=0.044](7.68,-1.55)(6.9,-1.55)
\pscustom[linewidth=0.04,linecolor=color132]
{
\newpath
\moveto(0.34,0.7)
\lineto(0.61,0.8)
\curveto(0.745,0.85)(0.945,0.935)(1.01,0.97)
\curveto(1.075,1.005)(1.295,1.035)(1.45,1.03)
\curveto(1.605,1.025)(1.885,0.97)(2.01,0.92)
\curveto(2.135,0.87)(2.42,0.805)(2.58,0.79)
\curveto(2.74,0.775)(3.02,0.795)(3.14,0.83)
\curveto(3.26,0.865)(3.575,0.925)(3.77,0.95)
\curveto(3.965,0.975)(4.355,0.96)(4.55,0.92)
\curveto(4.745,0.88)(5.12,0.815)(5.3,0.79)
\curveto(5.48,0.765)(5.76,0.765)(5.86,0.79)
\curveto(5.96,0.815)(6.135,0.885)(6.21,0.93)
\curveto(6.285,0.975)(6.445,1.1)(6.53,1.18)
\curveto(6.615,1.26)(6.795,1.39)(6.89,1.44)
\curveto(6.985,1.49)(7.145,1.58)(7.21,1.62)
\curveto(7.275,1.66)(7.41,1.69)(7.48,1.68)
\curveto(7.55,1.67)(7.655,1.605)(7.69,1.55)
\curveto(7.725,1.495)(7.795,1.375)(7.83,1.31)
\curveto(7.865,1.245)(7.93,1.115)(7.96,1.05)
\curveto(7.99,0.985)(8.075,0.875)(8.13,0.83)
\curveto(8.185,0.785)(8.31,0.775)(8.38,0.81)
\curveto(8.45,0.845)(8.6,0.92)(8.68,0.96)
\curveto(8.76,1.0)(8.905,1.01)(8.97,0.98)
\curveto(9.035,0.95)(9.175,0.89)(9.25,0.86)
\curveto(9.325,0.83)(9.465,0.82)(9.53,0.84)
\curveto(9.595,0.86)(9.74,0.915)(9.82,0.95)
\curveto(9.9,0.985)(10.065,1.025)(10.15,1.03)
\curveto(10.235,1.035)(10.395,1.02)(10.47,1.0)
\curveto(10.545,0.98)(10.675,0.965)(10.84,0.98)
}
\psline[linewidth=0.03cm,linestyle=dashed,dash=0.16cm 0.16cm,arrowsize=0.05291667cm 2.0,arrowlength=1.4,arrowinset=0.4]{<->}(1.28,0.88)(1.28,-1.56)
\usefont{T1}{ptm}{m}{n}
\rput(1.5314063,-0.455){$d$}
\usefont{T1}{ptm}{m}{n}
\rput(3.9914062,-1.095){$x_0(t)$}
\usefont{T1}{ptm}{m}{n}
\rput(2.1114063,1.865){$y = \eta(x,t)$}
\psline[linewidth=0.04cm,linestyle=dashed,dash=0.16cm 0.16cm,arrowsize=0.05291667cm 2.0,arrowlength=1.4,arrowinset=0.4]{->}(3.0,1.6)(3.76,1.0)
\psline[linewidth=0.04cm,linestyle=dashed,dash=0.16cm 0.16cm](6.08,0.86)(6.08,0.52)
\psline[linewidth=0.04cm,linestyle=dashed,dash=0.16cm 0.16cm](8.04,0.9)(8.04,0.56)
\psline[linewidth=0.04cm,linestyle=dashed,dash=0.16cm 0.16cm,arrowsize=0.05291667cm 2.0,arrowlength=1.4,arrowinset=0.4]{<->}(6.1,0.62)(8.04,0.62)
\usefont{T1}{ptm}{m}{n}
\rput(7.161406,0.305){$\I = [a, b]$}
\end{pspicture} 
}
\caption{\small\em Sketch of the physical domain with an underwater object moving along the bottom.}
\label{fig:sketch}
\end{figure}
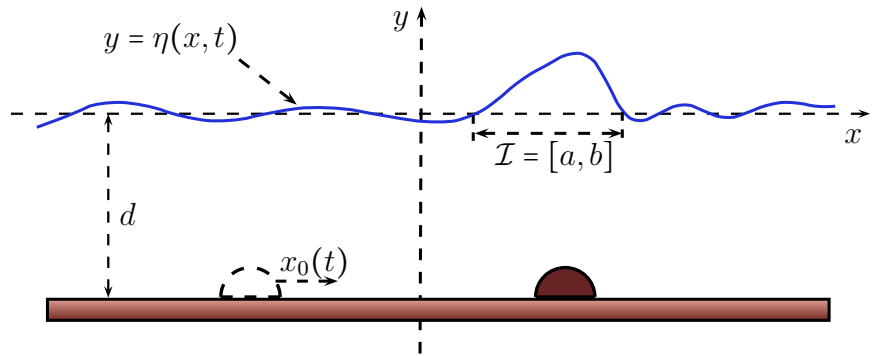

In 1987 T.~\textsc{Wu} \cite{Wu1987} derived the following Boussinesq-type system to study the generation of solitary waves by moving disturbances:
\begin{eqnarray}
\eta_t +\left(\left(h+\eta\right)u\right)_x &=& -h_t, \label{eq:Bouss1}\\
u_t + uu_x + g\eta_x &=& \frac{1}{2} h\left(h_t + (hu)_x\right)_{xt} - \frac{1}{6}h^2u_{xxt}. \label{eq:Bouss2}
\end{eqnarray}
This system represents a further generalization of the classical Boussinesq equations derived by D.H.~\textsc{Peregrine} (1967) \cite{Peregrine1967} for the case of the moving bottom $h(x,t)$. In our work we take the system \eqref{eq:Bouss1}, \eqref{eq:Bouss2} as the starting point. In order to simplify it further, we will switch to dimensionless variables (denoted with primes):
\begin{align*} 
x' = \frac{x}{l}, \quad \eta' = \frac{\eta}{a}, \quad h' = \frac{h}{d},  \quad u' = \frac{u}{\frac{ga}{\sqrt{gd}}}, \quad t' = \frac{t}{\frac{\ell}{\sqrt{gd}}},  \text{ and }  \zeta' = \frac{\zeta}{a}, 
\end{align*}
where $a$ and $\ell$ are the characteristic wave amplitude and wavelength correspondingly. We can compose three important dimensionless numbers which characterize the Boussinesq regime:
\begin{equation*} 
 \eps := \frac{a}{d} \ll 1, \quad \mu^2 := \left(\frac{d}{\ell}\right)^2 \ll 1, \text{ and } S := \frac{\eps}{\mu^2} = O(1),
\end{equation*}
where $S$ is the so-called Stokes-Ursell number \cite{Ursell1953}, which measures the relative importance of dispersive and nonlinear effects. In unscaled variables the Peregrine-Wu system takes the following form (for simplicity we drop out the primes):
\begin{eqnarray}\label{eq:bdimless1}
  \eta_t +\left(\left(h+\eps\eta\right)u\right)_x &=& -h_t, \\
  u_t + \eps uu_x + \eta_x &=& \frac{\mu^2}{2}h\left(h_t + (hu)_x\right)_{xt} - \frac{\mu^2}{6} h^2u_{xxt} \label{eq:bdimless2}.
\end{eqnarray}
To simplify the problem, we will reduce the Boussinesq system \eqref{eq:Bouss1}, \eqref{eq:Bouss2} to the unidirectional wave propagation. For instance, in the original work of T.~\textsc{Wu} \cite{Wu1987} a similar reduction to a forced KdV is also performed. However, the resulting model in our work will be of the BBM-type \cite{bona}, since it possesses better numerical stability properties.

The reduction to the BBM equation can be done in the following way \cite{Johnson2004}. The horizontal velocity $u$ can be approximatively represented in unscaled variables as
\begin{equation}\label{eq:unidirect}
  u = \eta + \eps P + \mu^2 Q + O(\eps^2 + \eps\mu^2 + \mu^4),
\end{equation}
where $P(x,t)$ and $Q(x,t)$ are some functions to be determined. The sign $+$ in front of $\eta$ means that we consider the waves moving in the rightward direction. Substituting the representation \eqref{eq:unidirect} into unscaled Boussinesq equations \eqref{eq:bdimless1}, \eqref{eq:bdimless2} yields two equivalent relations:
\begin{eqnarray}
\eta_t + \eta_x + \eps P_x + \mu^2 Q_x + 2\eps\eta\eta_x - \eps(\zeta\eta)_x  - \eps \zeta_t= O(\eps^2 + \eps\mu^2 + \mu^4), \label{eq:Boussunidir1}\\
\eta_t + \eta_x + \eps P_t + \mu^2 Q_t + \eps\eta\eta_x = \frac{\mu^2}{2} h\left(-\zeta_t + \left(h\eta\right)_x \right)_{xt} - \frac{\mu^2}{6} h^2 \eta _{xxt} + O(\eps^2 + \eps\mu^2 + \mu^4). \label{eq:Boussunidir2}
\end{eqnarray}
By subtracting \eqref{eq:Boussunidir1} from \eqref{eq:Boussunidir2} we obtain the following compatibility condition:
\begin{multline}\label{eq:PQ}
  \eps\left(P_x - P_t\right) + \mu^2\left(Q_x - Q_t\right) + \eps\eta\eta_x - \eps\left(\zeta\eta\right)_x   = \\
  \eps \zeta_t -\frac{\mu^2}{2}\left(-\zeta_t + \eta_x \right)_{xt} + \frac{\mu^2}{6}h^2\eta_{xxt} + O(\eps^2 + \eps\mu^2 + \mu^4).
\end{multline}
For the right-going waves we have the following identities:
\begin{equation*}
  P_t = -P_x + O(\eps), \qquad Q_t = -Q_x + O(\eps).
\end{equation*}
Finally, the unknown functions $P_x$ and $Q_x$ can be chosen to satisfy asymptotically the compatibility condition \eqref{eq:PQ}, which yields
\begin{equation*}
  2P_x = (\zeta\eta)_x - \eta\eta_x  + \zeta_t, \qquad 2Q_x = \frac{1}{2}\zeta_{xtt} -\frac{1}{3}\eta_{xxt},
\end{equation*}
where we used also the analytical representation of $h(x,t) = 1 - \eps\zeta(x,t)$. The BBM equation in unscaled variables can be now easily obtained by substituting expressions for $P_x$ and $Q_x$ into equation \eqref{eq:Boussunidir1}:
\begin{equation*}
  \eta_t + \eta_x + \frac{\eps}{2}\left((\zeta\eta)_x - \eta\eta_x + \zeta_t \right) + \frac{\mu^2}{2}\left(\frac{1}{2}\zeta_{xtt} - \frac{1}{3}\eta_{xxt}\right) + 2\eps\eta\eta_x - \eps(\zeta\eta)_x = \eps  \zeta_t.
\end{equation*}
Turning back to physical variables, the generalized forced BBM (gBBM) equation takes the following form:
\begin{equation*}
  \eta_t + \left(\sqrt{gd}\eta + \sqrt{\frac{g}{d}}\left(\frac{3}{4}\eta^2 - \frac{1}{2}\zeta\eta\right)\right)_x - \frac{d^2}{6}\eta_{xxt} = -\frac{1}{4}\frac{d^2}{\sqrt{gd}}\zeta_{xtt} +\frac{1}{2} \zeta_t.
\end{equation*}
In subsequent sections we will use this equation to model wave-bottom interaction. However, in order to simplify the notations, we will introduce a new set of dimensionless variables, where all the lengthes are unscaled with the water depth $d$, velocities with $\sqrt{gd}$ and the time variable with $\displaystyle{\sqrt{d/g}}$. In this scaling the gBBM equation reads:
\begin{equation}\label{eq:gbbm}
  \eta_t + \left(\eta + \frac{3}{4}\eta^2 - \frac{1}{2}\zeta\eta\right)_x - \frac{1}{6}\eta_{xxt} = -\frac{1}{4}\zeta_{xtt} +\frac{1}{2} \zeta_t.
\end{equation}
Recall that  here  $\eta$ is the unknown free surface elevation, and $\zeta$ is a given function, which is the topography of the moving body defined by \eqref{eq:topogr}.
The last equation has to be completed by appropriate initial and boundary conditions (when posed on a finite or semi-infinite domain):
\begin{equation}\label{eq:gbbm2}
  \eta(x,0) = \eta_0(x), \quad x\in \R.
\end{equation}

The method that we used to get the model \eqref{eq:gbbm} is  known in the literature. For instance, in  \cite{Bona2005}, by  similar arguments, the model  waves generation by a moving boundary was obtained.  The  BBM system was also justified by laboratory experiments \cite{Bona1981}.

\section{Well-posedness of the gBBM equation}\label{sec:well}

In this section we give a proof of the well-posedness of the gBBM equation \eqref{eq:gbbm} in the Sobolev spaces $H^s := H^s(\R)$. First, we have the following result.

\begin{theorem}\label{wellpos}
For any $\zeta \in C^2([0,\infty),H^{s}) \cap C([0,\infty),H^{[s]+1})$ and $\eta_0 \in H^s$, $s\ge0$ the problem \eqref{eq:gbbm}, \eqref{eq:gbbm2} admits a unique solution $\eta\in C([0,\infty),H^{s})$.
\end{theorem}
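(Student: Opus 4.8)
The plan is to recast \eqref{eq:gbbm}, \eqref{eq:gbbm2} as an integral equation and then combine a contraction mapping with a priori energy estimates. Introduce the elliptic operator $L := 1-\tfrac{1}{6}\partial_{xx}$; it is an isomorphism of $H^{\sigma}$ onto $H^{\sigma+2}$ for every $\sigma$, and --- the feature that makes the BBM structure comfortable --- the operators $L^{-1}$ and $L^{-1}\partial_{x}$ are bounded on \emph{every} $H^{\sigma}$, since their Fourier multipliers $(1+\xi^{2}/6)^{-1}$ and $\ui\xi(1+\xi^{2}/6)^{-1}$ are bounded. Applying $L^{-1}$ to \eqref{eq:gbbm}, the problem becomes
\[
  \eta(t) \;=\; S(t)\eta_{0} \;+\; \int_{0}^{t} S(t-\tau)\Bigl[\,-\,L^{-1}\partial_{x}\!\bigl(\tfrac{3}{4}\eta^{2}-\tfrac{1}{2}\zeta\eta\bigr)(\tau) \;+\; F(\tau)\Bigr]\,\ud\tau ,
\]
where $S(t):=\ue^{-t L^{-1}\partial_{x}}$ is the propagator of the linear BBM equation --- a unitary group on every $H^{\sigma}$, its symbol being purely imaginary --- and $F:=L^{-1}\bigl(-\tfrac{1}{4}\zeta_{xtt}+\tfrac{1}{2}\zeta_{t}\bigr)$. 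Since $\zeta\in C^{2}([0,\infty),H^{s})$ and $L^{-1}$, $L^{-1}\partial_{x}$ gain derivatives, $F\in C([0,\infty),H^{s})$ is bounded on each $[0,T]$.

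First I would prove local existence and uniqueness in $C([0,T],H^{s})$ for every $s\ge0$, showing that the right-hand side above is a contraction on a small ball of $C([0,T],H^{s})$ when $T=T(\|\eta_{0}\|_{H^{s}},\zeta)$ is small. The only point that is not completely mechanical is the nonlinearity: the derivative absorbed by $L^{-1}\partial_{x}$ means one only needs $\|\eta^{2}\|_{H^{s-1}}\lesssim\|\eta\|_{H^{s}}^{2}$, which for $s>\tfrac12$ is the algebra property of $H^{s}$, and for $0\le s\le\tfrac12$ follows from the one-dimensional product law $H^{s}\cdot H^{s}\hookrightarrow H^{2s-1/2-\eps}\hookrightarrow H^{s-1}$. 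The bilinear term $\zeta\eta$ is handled by $\|\zeta\eta\|_{H^{s}}\lesssim\|\zeta\|_{H^{[s]+1}}\|\eta\|_{H^{s}}$, because $H^{[s]+1}$ --- an integer Sobolev space of order strictly above $s$ and at least $1>\tfrac12$ --- multiplies $H^{s}$; this is precisely why the hypothesis on $\zeta$ is stated with $H^{[s]+1}$. Standard continuation then yields a unique maximal solution $\eta\in C([0,T^{*}),H^{s})$ with the blow-up alternative: $T^{*}=\infty$, or $\|\eta(t)\|_{H^{s}}\to\infty$ as $t\uparrow T^{*}$.

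To globalise when $s\ge1$ I would use the basic energy estimate: multiplying \eqref{eq:gbbm} by $\eta$ and integrating over $\R$, the cubic term $\tfrac{3}{2}\int\eta^{2}\eta_{x}=\tfrac{1}{2}\int(\eta^{3})_{x}$ vanishes, the contribution of $\tfrac{1}{2}(\zeta\eta)_{x}$ collapses to $-\tfrac{1}{4}\int\zeta_{x}\eta^{2}$, and the forcing is integrated against $\eta$ and $\eta_{x}$; with $E(t):=\|\eta\|_{L^{2}}^{2}+\tfrac{1}{6}\|\eta_{x}\|_{L^{2}}^{2}\sim\|\eta\|_{H^{1}}^{2}$ this gives $\od{E}{t}\lesssim\|\zeta\|_{H^{[s]+1}}\,E+\bigl(\|\zeta_{t}\|_{L^{2}}+\|\zeta_{tt}\|_{L^{2}}\bigr)\sqrt{E}$, so Gronwall bounds $\|\eta(t)\|_{H^{1}}$ --- hence $\|\eta(t)\|_{L^{\infty}}$ --- on each $[0,T]$. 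Inserting this into the integral equation together with the Moser bound $\|\eta^{2}\|_{H^{s-1}}\lesssim\|\eta\|_{L^{\infty}}\|\eta\|_{H^{s-1}}$ (valid for $s\ge1$), a second Gronwall argument bounds $\|\eta(t)\|_{H^{s}}$ on every finite interval, so the blow-up alternative forces $T^{*}=\infty$.

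The remaining range $0\le s<1$ is where the real work lies. Now $\eta_{0}\notin H^{1}$, so the energy identity is unavailable, and the integral equation alone yields only a Riccati-type bound $\|\eta(t)\|_{H^{s}}\le\|\eta_{0}\|_{H^{s}}+C\int_{0}^{t}\bigl(\|\eta\|_{H^{s}}^{2}+\|\eta\|_{H^{s}}+1\bigr)\ud\tau$, which a priori controls $\eta$ only on a finite interval. I would exploit instead the isometry of the linear flow: write $\eta=S(t)\eta_{0}+z$, so $z(0)=0$ and $z$ solves a gBBM-type equation in which the quadratic self-interaction $\bigl(S(t)\eta_{0}\bigr)^{2}$ enters as an external forcing. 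Because $\|S(t)\eta_{0}\|_{L^{2}}=\|\eta_{0}\|_{L^{2}}$ is conserved, $\bigl(S(t)\eta_{0}\bigr)^{2}$ is bounded in $L^{1}$ uniformly in $t$, and the derivative gain of $L^{-1}\partial_{x}$ together with the one-dimensional embeddings $H^{\sigma}\hookrightarrow L^{p}$ lets one bootstrap the Duhamel formula for $z$ in finitely many steps up to $z\in C([0,T^{*}),H^{1})$. The energy estimate above, now applied to the correction $z$ --- whose datum $z(0)=0$ does lie in $H^{1}$, the extra terms built from $S(t)\eta_{0}$ and $\zeta$ being time-bounded $H^{-1}$ forcings or lower-order --- then bounds $\|z(t)\|_{H^{1}}$ on each $[0,T]$, whence $\|\eta(t)\|_{H^{s}}\le\|\eta_{0}\|_{H^{s}}+\|z(t)\|_{H^{1}}$ stays finite and $T^{*}=\infty$. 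Uniqueness in $C([0,\infty),H^{s})$ is inherited from the local uniqueness. This low-regularity globalisation --- essentially the mechanism behind the sharp well-posedness theory of the unforced BBM equation --- is the only genuinely delicate step; everything else is a routine adaptation.
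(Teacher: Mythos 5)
Your architecture (Duhamel formulation with $L=1-\tfrac16\partial_{xx}$, contraction for local existence, energy estimates for global existence when $s\ge 1$) is sound and, for $s\ge 1$, essentially a repackaging of what the paper does. The genuine gap is in the range $0\le s<1$, precisely where you place "the real work". You decompose $\eta=S(t)\eta_0+z$ and claim that the resulting forcing terms built from $S(t)\eta_0$ are "time-bounded $H^{-1}$ forcings", so that the $H^1$ energy identity closes for $z$. But the quadratic self-interaction $\tfrac34\partial_x\bigl((S(t)\eta_0)^2\bigr)$ lies in $H^{-1}$ only if $(S(t)\eta_0)^2\in L^2$, i.e.\ $S(t)\eta_0\in L^4$; for $\eta_0\in H^s$ with $0\le s<\tfrac14$ this is not available (and $S(t)$ is an isometry only on the $H^\sigma$ scale, not on $L^p$). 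With only $(S(t)\eta_0)^2\in L^1$, pairing against $z_x$ would require $z_x\in L^\infty$, which $H^1$ control does not give. The preliminary "bootstrap in finitely many steps up to $z\in C([0,T^*),H^1)$" is asserted rather than executed, and it is not innocuous: the multiplier of $L^{-1}\partial_x$ sends $L^1$ only into $H^{1/2-\eps}$, so each step gains at most half a derivative and the cross terms $S(t)\eta_0\cdot z$ must be re-estimated at each stage. This is exactly the delicate low-regularity mechanism of Bona--Tzvetkov, and your sketch stops short of it.

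The paper sidesteps this entirely, and the comparison is instructive. Instead of subtracting the \emph{linear} flow, it subtracts the \emph{full nonlinear} solution $u$ of the unforced BBM equation \eqref{eq:BBM1}, whose global existence in $C([0,\infty),H^s)$ for all $s\ge 0$ is imported from Bona--Tzvetkov (2009). The correction $v=\eta-u$ then satisfies \eqref{eq:BBMdiff}, in which the dangerous term $\partial_x(u^2)$ is absent -- it has been absorbed into the equation for $u$ -- and only the cross term $\partial_x(uv)$ and the $\zeta$-terms remain; these are handled by $\bigl|\int_\R u\,v\,v_x\,\ud x\bigr|\le \|u\|_{L^2}\|v\|_{L^\infty}\|v_x\|_{L^2}$, so the $H^1$ energy estimate for $v$ (which starts from $v(\cdot,0)=0$) closes for every $s\ge 0$, with an induction on $[s]$ for higher regularity. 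If you are unwilling to cite the unforced low-regularity theory as a black box, you must genuinely reprove it, and your current argument does not; if you are willing to cite it, the cleaner move is the paper's nonlinear decomposition rather than the linear one.
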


\begin{proof}
\textbf{Uniqueness.} Let us assume that for some given functions $\zeta$ and $\eta_0$ our problem \eqref{eq:gbbm}, \eqref{eq:gbbm2} admits two different solutions $\eta_1$ and $\eta_2$. The difference $\teta := \eta_1 - \eta_2$ satisfies the following initial-value problem:
\begin{equation}\label{eq:uniqueness}
  \teta_t + \left(\teta + \frac{3}{4}\teta(\eta_1 + \eta_2) - \frac{1}{2}\zeta\teta\right)_x - \frac{1}{6}\teta_{xxt} = 0, \qquad \teta (x,0) = 0.
\end{equation}
Using Fourier transformation, we can rewrite \eqref{eq:uniqueness} in the form
\begin{equation}\label{eq:uniqinfourier}
  \ui \teta_t= \varphi(D_x) \left(\teta + \frac{3}{4}\teta(\eta_1 + \eta_2) - \frac{1}{2}\zeta\teta\right),
\end{equation}
where $\varphi(D_x)$ is defined by $\widehat{\varphi(D_x)v}(\xi) := \frac {\xi}{ 1+1/6 \xi^2} \hat v (\xi).$ Clearly,  \eqref{eq:uniqinfourier} implies 
\begin{equation*}
\teta(x,t) = -\ui \int\limits_0^t\varphi(D_x) \left(\teta + \frac{3}{4}\teta(\eta_1 + \eta_2) - \frac{1}{2}\zeta\teta\right) \ud t.
\end{equation*}
From the inequality 
\begin{equation*}  
\| \varphi(D_x) (uv) \|_0 \le C\|u\|_0\|v\|_0
\end{equation*}
it follows that
\begin{equation}\label{eq:uniqinintegralform}
  \sup _{t\in [0,T]} \|\teta (x,t )\|_0 \le    C T  \left( 1+\|\eta_1\|_0 + \|\eta_2\|_0 + \|\zeta \|_0\right) \|\teta\|_0.
\end{equation}
Thus, the application of the Gronwall inequality yields $\teta = 0$. \qed

\textbf{Existence.} For any fixed time horizon $T>0$ let us show that our problem \eqref{eq:gbbm}, \eqref{eq:gbbm2} has a solution $\eta\in C([0,T),H^{s})$. J.~\textsc{Bona} \& N.~\textsc{Tzvetkov} (2009) \cite{Bona2009} proved that for any given initial data $\eta_0 \in H^s$ the following BBM equation
\begin{equation*}
  u_t + u_x + uu_x - u_{xxt} = 0, \qquad u(x,0) = u_0(x),
\end{equation*}
admits a unique solution $u \in C( [0,\infty), H^s)$ (for $s<0$ they proved that the system is ill-posed). Using a scaling argument, we have also the well-posedness of the same equation with some positive coefficients:
\begin{equation}\label{eq:BBM1}
  u_t + \left(u + \frac{3}{4}u^2\right)_x - \frac{1}{6}u_{xxt} = 0, \qquad u(x,0) = \eta_0(x).
\end{equation}
We seek a solution of \eqref{eq:gbbm}, \eqref{eq:gbbm2} in the form $\eta=u+v$, where $u\in C([0,\infty),H^{s})$ is the solution of \eqref{eq:BBM1} and $v$ satisfies
\begin{equation}\label{eq:BBMdiff}
  v_t + \left(v + \frac{3}{4}(v^2 + 2uv) - \frac{1}{2}\zeta(v+u)\right)_x -\frac{1}{6}v_{xxt} = -\frac{1}{4}\zeta_{xtt} +\frac{1}{2} \zeta_t, \qquad v(x,0) = 0.
\end{equation}

Let us prove the existence of such $v \in C([0,\infty),H^{s}) $ by induction on $[s]$. First, we assume $[s]=0$. Taking the scalar product of \eqref{eq:BBMdiff} with $v$ in $L^2$, we obtain
\begin{equation*}
  \od{}{t}\int\limits_\R\left(\frac{1}{2} v^2(x,t) +  \frac{1}{12}{v}_x^2(x,t) \right)\ud x = -\int\limits_\R\frac{1}{4}\zeta_{xtt}v\ud x +\int\limits_\R\frac{1}{2}\zeta_{t}v\ud x + \int_{\R}\left(\frac{3}{2}uv - \frac{1}{2}\zeta(v+u)\right)v_x\ud x.
\end{equation*}
Using the Sobolev and H\"older inequalities, we get
\begin{multline}\label{eq:estim2}
  \od{}{t}\|v(\cdot,t)\|^2_1 \le C\bigg(\|\zeta_{tt}(\cdot,t)\|_0 \|v(\cdot,t)\|_1 +\|\zeta_{t}(\cdot,t)\|_0 \|v(\cdot,t)\|_0 + \|u(\cdot,t)\|_0 \|v(\cdot,t)\|^2_1 \\ + \|\zeta(\cdot,t)\|_1 \|v(\cdot,t)\|_1\left(\|v(\cdot,t)\|_0 + \|u(\cdot,t)\|_0 \right)\bigg).
\end{multline}
After integrating \eqref{eq:estim2} on the interval $(0,t)$ we obtain
\begin{multline}\label{eq:estim3}
  \|v(\cdot,t)\|_1^2 \le C\sup\limits_{t\in[0,T]}\|v(\cdot,t)\|_1 \int\limits_0^T \bigg(\|\zeta_{tt}(\cdot,s)\|_0+\|\zeta_{t}(\cdot,s)\|_0 + \|u(\cdot,s)\|_0\|v(\cdot,s)\|_1 + \\ \|\zeta(\cdot,s)\|_1 \left(\|v(\cdot,s)\|_0 + \|u(\cdot,s)\|_0\right)\bigg)\ud s,
\end{multline}
which is valid for any $t\in[0,T]$. Hence, $\sup\limits_{t\in[0,T]}\|v(\cdot,t)\|_1^2$ also can be estimated by the right hand-side of \eqref{eq:estim3}. Dividing by $\sup\limits_{t\in[0,T]} \|v(\cdot,t)\|_1^2$ and applying the Gronwall inequality, we deduce
\begin{multline*}
\sup\limits_{t\in[0,T]}\|v(\cdot,t)\|_1 \le C \int\limits_0^T \left(\|\zeta_{tt}(\cdot,s)\|_0 + \|\zeta_{t}(\cdot,s)\|_0 + \|\zeta(\cdot,s)\|_{1} \|u(\cdot,s)\|_0\right) \ud s \times \\ \exp\left(\int\limits_0^T (\|u(\cdot,s)\|_0 + \|\zeta(\cdot,s)\|_1)\ud s\right).
\end{multline*}
Using this estimation and applying a fixed point argument, we can obtain the existence of the solution in the case $s\in [0,1]$.

By induction, now we assume the existence of $v$ for $[s]<\alpha$ for some integer $\alpha > 1$ and let us prove it for $[s]=\alpha$. To this end, let us take the $\od{^\alpha}{x^\alpha}$, $\alpha \le [s]$ derivative of \eqref{eq:BBMdiff}, multiply the resulting equation by $v_\alpha := \od{^\alpha v}{x^\alpha}$ and integrating in $x$ over $\R$, we obtain
\begin{multline}\label{eq:withal}
\od{}{t}\int\limits_\R\left(\frac{1}{2}v_\alpha^2(x,t) + \frac{1}{12}{v_\alpha}_x^2(x,t)\right)\ud x = - \int\limits_\R\left(\frac{1}{4}\left(\od{^\alpha}{x^\alpha}\zeta_{xtt}\right)v _\alpha\right)\ud x +\int\limits_\R\left(\frac{1}{2}\left(\od{^\alpha}{x^\alpha}\zeta_{t}\right)v _\alpha\right)\ud x+ \\ \int\limits_\R\left(\frac{3}{4}\od{^\alpha}{x^\alpha}(v^2 + 2uv) - \frac{1}{2}\od{^\alpha}{x^\alpha}(\zeta(v+u))\right){v_\alpha}_x\ud x.
\end{multline}
All the terms can be treated as above, except $\int\limits_\R \od{^\alpha}{x^\alpha}(v^2){v _\alpha}_x \ud x$, which is not zero in general. Using the induction hypothesis and the fact that $\alpha - 1 \ge 1$, we can estimate
\begin{equation*}
  \left|\int\limits_\R\od{^\alpha}{x^\alpha}(v^2){v _\alpha}_x\ud x\right|\leq C\|v_\alpha\|_1^2 \|v\|_{\alpha-1} < M\| v_\alpha\|_1^2.
\end{equation*}
Using the last estimation along with \eqref{eq:withal}, the Sobolev and H\"older inequalities, as above, we prove the required estimation for $v_\alpha$, which completes the proof.
\end{proof}

\section{The optimization problem}\label{sec:opt}

In this section we turn to the optimization problem for the gBBM equation \eqref{eq:gbbm}. We assume that the wave making piston is a solid, non-deformable object. Thus, its shape, given by a localized function $\zeta_0(x)$, is preserved during the motion and, accordingly,  it is sufficient to prescribe the trajectory of its barycenter,  $x = x_0(t)$. Consequently, the time-dependent bathymetry is given by the following equation
\begin{equation*}
  h(x,t) = d\ -\ \zeta_0\bigl(x - x_0(t)\bigr).
\end{equation*}

The piston shape $\zeta_0(x)$ and its trajectory $x_0(t)$ will be determined as a solution of the optimization problem. More precisely, in the next section we will find numerically these functions in order to produce the largest possible wave (in the $L_2$ sense) in a given subinterval $\I = [a, b]$ of the numerical wave tank at some fixed time $T > 0$. In other words, we minimize the following functional:
\begin{equation}\label{eq:cost1}
  J(x_0,\zeta_0) = -\int\limits_\I \eta(x,T)^2\ud x \longrightarrow  \min,
\end{equation}
where $\eta(x,t)$ is the solution of \eqref{eq:gbbm}, \eqref{eq:gbbm2}. The existence of this solution is proven in the following
\begin{theorem}\label{thm:opt}
For any constants $\eps, M > 0$, there exists $(x_0^*,\zeta_0^*)\in B_M$ such that
\begin{equation*}
  J(x_0^*,\zeta_0^*) = \inf_{(x_0,\zeta_0) \in B_M} J(x_0,\zeta_0),
\end{equation*}
where $B_M$ is a closed ball in $ H^{2+\eps}[0,T]\times H^{2+\eps}_0([0,1])$, $\eps  > 0$, centered at origin with radius $M$.
\end{theorem}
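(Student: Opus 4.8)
The plan is to run the direct method of the calculus of variations. The three ingredients are: reflexivity of the Hilbert space $H^{2+\eps}[0,T]\times H^{2+\eps}_0([0,1])$, so that a minimizing sequence in the closed ball $B_M$ has a weakly convergent subsequence with limit still in $B_M$; compactness of the relevant one‑dimensional Sobolev embeddings, to upgrade weak convergence of the data to strong convergence; and the well‑posedness theory of Section~\ref{sec:well} together with a stability estimate, to transfer convergence of the data $(x_0,\zeta_0)$ to convergence of the state $\eta$ and hence of $J$. \textbf{Finiteness of the infimum.} First I would check that $J$ is real‑valued and bounded below on $B_M$. For $(x_0,\zeta_0)\in B_M$ put $\zeta(x,t):=\zeta_0\bigl(x-x_0(t)\bigr)$ (extending $\zeta_0$ by zero to $\R$); by the chain rule $\zeta_t=-\dot x_0\,\zeta_0'(\cdot-x_0)$ and $\zeta_{tt}=-\ddot x_0\,\zeta_0'(\cdot-x_0)+\dot x_0^{\,2}\,\zeta_0''(\cdot-x_0)$, and using $H^{2+\eps}[0,T]\hookrightarrow C^1[0,T]$ and $H^{2+\eps}_0([0,1])\hookrightarrow H^2$ one checks that $\zeta$ lies in the class $C^2([0,T],H^s)\cap C([0,T],H^{[s]+1})$ demanded by Theorem~\ref{wellpos}, for a suitable $s\in(0,\eps]$, with all the norms of $\zeta$ entering the estimates of Section~\ref{sec:well} bounded by some $C(M,T)$. (If $\eps\le\tfrac12$ one instead observes that the proof of Theorem~\ref{wellpos} uses only $\zeta_{tt}\in L^1_{\mathrm{loc}}(H^s)$, which still holds.) Theorem~\ref{wellpos} then gives a unique $\eta\in C([0,T],H^s)$, and the a priori bound from the existence proof (the splitting $\eta=u+v$ followed by Gronwall) yields $\|\eta(\cdot,T)\|_0\le C(M,T)$; hence $-C(M,T)^2\le J(x_0,\zeta_0)\le 0$ and $m:=\inf_{B_M}J\in\R$.

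\textbf{Minimizing sequence and strong convergence of the data.} Next I would take $(x_0^n,\zeta_0^n)\in B_M$ with $J(x_0^n,\zeta_0^n)\to m$. Since $B_M$ is convex and norm‑closed in a reflexive (separable) Hilbert space it is weakly sequentially compact, so a subsequence converges weakly to some $(x_0^*,\zeta_0^*)\in B_M$. By Rellich--Kondrachov the embeddings $H^{2+\eps}[0,T]\hookrightarrow H^2[0,T]$ and $H^{2+\eps}_0([0,1])\hookrightarrow H^{2}([0,1])\hookrightarrow C^1$ are compact, so along a further subsequence $x_0^n\to x_0^*$ and $\zeta_0^n\to\zeta_0^*$ strongly there (in particular $\ddot x_0^n\to\ddot x_0^*$ in $L^2[0,T]$, $\dot x_0^n\to\dot x_0^*$ uniformly, $\zeta_0^{n\prime},\zeta_0^{n\prime\prime}$ converge in $L^2$). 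Feeding this into the chain‑rule formulas, and using that the shift $\tau\mapsto\zeta_0(\cdot-\tau)$ is continuous from $\R$ into $H^s$ uniformly on compacts, I would conclude that the driving forcings $\zeta^n(x,t):=\zeta_0^n(x-x_0^n(t))$ converge to $\zeta^*(x,t):=\zeta_0^*(x-x_0^*(t))$ in the class $C^2([0,T],H^s)\cap C([0,T],H^{[s]+1})$ (or, in the rough case $\eps\le\tfrac12$, with $\zeta_{tt}^n\to\zeta_{tt}^*$ in $L^1([0,T],H^s)$).

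\textbf{Continuous dependence and conclusion.} The core point is that the solution map $\zeta\mapsto\eta$ of \eqref{eq:gbbm}, \eqref{eq:gbbm2} is continuous for these topologies. Let $\eta^n,\eta^*$ be the solutions driven by $\zeta^n,\zeta^*$; by the first step they are bounded in $C([0,T],H^s)$ uniformly in $n$. The difference $w^n:=\eta^n-\eta^*$ solves a linear equation of the same shape as \eqref{eq:uniqueness} with forcing $-\tfrac14(\zeta^n-\zeta^*)_{xtt}+\tfrac12(\zeta^n-\zeta^*)_t$ plus lower‑order terms that are linear in $\zeta^n-\zeta^*$ with coefficients built from $\eta^n,\eta^*,\zeta^n$. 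Running the energy estimate of the uniqueness part (scalar product with $w^n$, the algebra bound $\|\varphi(D_x)(uv)\|_0\le C\|u\|_0\|v\|_0$, Gronwall), and for $[s]\ge1$ its differentiated version from the existence part, should give
\begin{equation*}
  \sup_{t\in[0,T]}\|w^n(\cdot,t)\|_s\ \le\ C(M,T)\,\bigl\|\zeta^n-\zeta^*\bigr\|_{C^2([0,T],H^s)\cap C([0,T],H^{[s]+1})}\ \longrightarrow\ 0 .
\end{equation*}
In particular $\eta^n(\cdot,T)\to\eta^*(\cdot,T)$ in $H^s(\R)$, hence in $L^2(\I)$, so $\int_\I\eta^n(x,T)^2\,\ud x\to\int_\I\eta^*(x,T)^2\,\ud x$, i.e. $J(x_0^n,\zeta_0^n)\to J(x_0^*,\zeta_0^*)$. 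Comparing with $J(x_0^n,\zeta_0^n)\to m$ gives $J(x_0^*,\zeta_0^*)=m=\inf_{B_M}J$, which is the assertion.

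\textbf{Where the difficulty lies.} Steps one and two are bookkeeping with Sobolev embeddings and the chain rule; the substantive work is the stability estimate of the last step. It is not a literal rerun of the uniqueness argument because of the forcing term $\zeta_{xtt}$, which costs two time derivatives and one space derivative of $\zeta_0$ — precisely why the admissible set is chosen in $H^{2+\eps}$ rather than a rougher space — and because for $s>1$ the quadratic difference $\partial_x^{\alpha}\bigl((\eta^n)^2-(\eta^*)^2\bigr)$ must be controlled by the same induction on $[s]$ used in the existence proof. Matching the exponents $s$ and $\eps$ consistently, so that $\zeta^n\to\zeta^*$ strongly in exactly the class required by Theorem~\ref{wellpos} (with the $L^1_{\mathrm{loc}}$-in-time relaxation when $\eps\le\tfrac12$), is the one point that needs genuine care.
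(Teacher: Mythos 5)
Your proposal is correct and follows essentially the same route as the paper's proof: the direct method with weak compactness of $B_M$, compact Sobolev embeddings to upgrade to strong convergence of the forcing $\zeta^n$ (in particular $\zeta^n_{tt}\to\zeta^*_{tt}$ in $L^2([0,T],L^2)$), and a Gronwall energy estimate on the difference equation to get $\eta^n(\cdot,T)\to\eta^*(\cdot,T)$ in $L^2$ and hence convergence of $J$. The only cosmetic difference is that the paper phrases the stability step as showing $\eta^n$ is Cauchy (via $\teta^{n,m}=\eta^n-\eta^m$) rather than comparing directly with $\eta^*$, and your write-up is in fact more explicit about the a priori bounds and the bookkeeping of Sobolev exponents than the published argument.
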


\begin{proof}
Let $(x_0^n, \zeta_0^n)$ be an arbitrary minimizing sequence of $J$. Since  $ H^{2+\eps}[0,T]\times H^{2+\eps}_0 ([0,1])$ is a Hilbert space, extracting a subsequence, if it is necessary, we can assume that there is $(x_0^*,\zeta_0^*) \in B_M$ such that $(x_0^n, \zeta_0^n) \rightharpoonup (x_0^*,\zeta_0^*)$ weakly  in $B_M$.

Let us denote $\eta^n$ the solution of \eqref{eq:gbbm}, \eqref{eq:gbbm2} with $\zeta = \zeta^n := \zeta_0^n(x - x^n_0(t))$. Let us show that we have $\eta^n(T) \to \eta^*(T)$ in $L^2$, where $\eta^*$ is the solution of \eqref{eq:gbbm}, \eqref{eq:gbbm2} with  $\zeta = \zeta^*$.   Indeed, for $  \teta^{n,m}:=\eta^n-\eta^m$ we have
\begin{equation}\label{eq:optimcauchy}
  \teta^{n,m}_t + \left(\teta ^{n,m}+ \frac{3}{4}\teta^{n,m}(\eta^n + \eta^m) - \frac{1}{2}\zeta^n\teta^{n,m}
  - \frac{1}{2}\zeta^{n,m}\teta^{m} \right)_x - \frac{1}{6}\teta^{n,m}_{xxt} = -\frac{1}{4}\tilde\zeta^{n,m}_{xtt}+\frac{1}{2}\tilde\zeta^{n,m}_{t}.
 \qquad \teta^{n,m} (x,0) = 0.
\end{equation}
Since   $\zeta_{tt}^n \to \zeta_{tt}^* := \partial_{tt} (\zeta_0^*(x - x^\ast_0(t)))$ in $L^2([0,T], L^2)$, multiplying \eqref{eq:optimcauchy} in $L^2$ by $ \teta^{n,m}$, integrating by parts  and applying the Gronwall inequality, we obtain that $\eta^n$  is a Cauchy sequence in $H^1$. Hence, 

\begin{equation*}
  J(x_0^*, \zeta_0^*) = \lim_{n \to \infty} J(x_0^n, \zeta_0^n).
\end{equation*}
This completes the proof of the theorem.
\end{proof}

However, in practice, the functional \eqref{eq:cost1} has to be completed by appropriate constraints in order to provide a solution realizable in practice. For example, the speed of the underwater piston is limited by technological and energy consumption limitations. Some more realistic formulations will be addressed numerically in the next Section.

\section{Numerical results}\label{sec:num}

In order to discretize the gBBM equation \eqref{eq:gbbm}, posed on a finite interval $[\alpha, \beta]$, we use a modern high-order finite volume scheme with the FVCF flux \cite{Ghidaglia2001} and the UNO2 reconstruction \cite{HaOs}. The combination of these numerical ingredients has been extensively tested and validated in the context of the unidirectional wave models \cite{Dutykh2010e} and Boussinesq-type equations \cite{Dutykh2011, Dutykh2011e}. For the time-discretization, we use the third-order Runge-Kutta scheme, which is also used in the \texttt{ode23} function in Matlab \cite{Shampine1997}. In all experiments presented below we assume that the water layer is initially at rest:
\begin{equation*}
  \eta(x,0) = \eta_0(x) \equiv 0.
\end{equation*}
The computational domain $[\alpha, \beta]$ is discretized in $N$ equal subintervals, called usually the control volumes. The time step is chosen locally in order to satisfy the following CFL condition \cite{Courant1928} used in shallow-water models: 
\begin{equation*}
  \Delta t \leq \frac{\Delta x}{\max\limits_{1 \leq i \leq N}{u_i} + \sqrt{gd}}.
\end{equation*}
The values of the physical and numerical parameters used in simulations are given in Table \ref{tab:params}.

On the left and right boundaries we apply the Neumann-type boundary conditions which do not produce reflections. In any case, we stop the simulation before the generated wave reaches the right boundary. We recall that the gBBM equations \eqref{eq:gbbm} describes the unidirectional (rightwards, for instance) wave propagation. So, the influence of the left boundary condition is negligible.

\begin{table}
\centering
\begin{tabular}{lr}
\hline\hline
  \textit{Parameter} & \textit{Value} \\
\hline\hline
  Computational domain $[\alpha, \beta]$: & $[-5, 10]\,\mathsf{m}$ \\
  Wave quality evaluation area $[a, b]$: & $[0, 6]\,\mathsf{m}$ \\
  Number of discretization points $N$: & $1000$ \\
  CFL number: & $1.95$ \\
  Gravity acceleration $g$: & $9.8\,\mathsf{m\,s^{-2}}$ \\
  Undisturbed water depth $d$: & $1.0\,\mathsf{m}$ \\
  Final simulation time $T$: & $6.0\,\mathsf{s}$ \\
  Piston motion total time $T_f$: & $4.0\,\mathsf{s}$ \\
  Piston length $\ell_0$: & $1.0\,\mathsf{m}$ \\
  Piston maximal height $a_0/d$: & $0.12$ \\
  Piston starting point $x_0^0$: & $0.0\,\mathsf{m}$ \\
  Upper bound of the piston position $x_{max}$: & $4.5\,\mathsf{m}$ \\
  Upper bound of the piston speed $v_f$: & $1.5\,\mathsf{m/s}$ \\
  Wave generation limit $x_f$: & $1.0\,\mathsf{m}$ \\  
  $\N$-wave solution ceter $x_m$: & $2.0\,\mathsf{m}$ \\
\hline\hline
\smallskip
\end{tabular}
\caption{\small\em Values of various parameters used in numerical computations.}
\label{tab:params}
\end{table}

Let us describe the constraints that we impose on the shape $\zeta_0(x) \geq 0$ and trajectory $x_0(t)$ of the underwater wavemaker. First of all, we fix the length $2\ell_0$ of this object. Then, we assume that its height is also bounded:
\begin{equation*}
  \max\limits_{x\in\R}\frac{\zeta_0(x)}{d} \leq a_0.
\end{equation*}
We allow the piston to move during the first $T_f \mathsf{s}$. Its motion always starts at the same initial point $x_0^0$ and it is confined to some wave generation area $[x_0(0), x_f] \subseteq [\alpha, \beta]$:
\begin{equation*}
  \supp{x'_0(t)} \subseteq [0, T_f], \qquad x_0(0) = x_0^0, \qquad
  x_0^0 \leq x_0(t) \leq x_f, \quad \forall t\in [0, T_f].
\end{equation*}
However, the cost function $J(x_0, \zeta_0)$ is evaluated at time $T > T_f$ so that the generated waveform can evolve further into the desired shape.

Moreover, we require that the piston speed and acceleration are bounded, since too fast motions are difficult to realize in practice because of the gradually increasing energy consumption:
\begin{equation*}
  \sup_{t \in [0,T_f]}\bigl(|x'_0(t)| + \sqrt{\frac{d}{g}}|x''_0(t)|\bigr) \le v_f
\end{equation*}

In order to parametrize the wavemaker shape, we use only three degrees of freedom $\zeta_0 \bigl(-\frac{\ell_0}{2}\bigr)$, $\zeta_0(0)$, $\zeta_0 \bigl( \frac{\ell_0}{2} \bigr)$ which represent the height of the object in three points equally spaced on the $\supp\zeta_0$. Finally, the continuous shape is reconstructed by applying the interpolation with cubic splines\footnote{Cubic splines ensure that the interpolant belongs to the class $C^2$.} through the following points:
\begin{equation*}
(-\ell_0, 0), \quad \Bigl(-\frac{\ell_0}{2}, \zeta_0\bigl(-\frac{\ell_0}{2}\bigr)\Bigr), \quad
(0, \zeta_0(0)), \quad \Bigl(\frac{\ell_0}{2}, \zeta_0\bigl(\frac{\ell_0}{2}\bigr)\Bigr), \quad (\ell_0, 0).
\end{equation*}
In a similar way we proceed with the parametrization of the piston trajectory $x_0(t)$ which is represented with $4$ degrees of freedom (three in the interior of the interval $(0, T_f)$ and the final point $x_0(T_f)$ which is not fixed as in the case of $\zeta_0$). Obviously, more degrees of freedom can be taken into account when it is needed for a specific application. However, the number of degrees of freedom determines the dimension of the phase space where we seek for the optimal solution. In examples below we operate in a closed subset of $\R^7$. In order to obtain an approximate solution to our constrained optimization problem, we use the function \texttt{fmincon} of the Matlab Optimization Toolbox. This solver is a gradient-based optimization procedure which uses the SQP algorithm. The iterative process is stopped when the default tolerances are met or the maximal number of iterations is reached. In all simulations presented below the convergence of the algorithm has been achieved.

As the first numerical example, we minimize the functional $J(x_0, \zeta_0)$ subject to constraints described above. Basically, this cost function measures the wave deviation from the still water level in a fixed portion $[a, b]$ of the wave tank. Consequently, bigger waves in this interval will provide lower values to the functional $J$. The result of the numerical optimisation procedure is represented on Figure \ref{fig:J1}. The free surface elevation computed at the final time $T$ is shown on Figure \ref{fig:J1}(a). One can see that in the region of interest $[a, b] = [2, 4]$ $\mathsf{m}$ we have a big depression wave which is followed by a wave of elevation. 

Thus, we succeeded to generate a wave suitable for surfing purposes in artificial environments. The computed shape of the underwater object is shown on Figure \ref{fig:J1}(b) and its trajectory is represented on Figure \ref{fig:J1}(c). It is interesting to note that the computed optimal shape is composed of two bumps. The piston trajectory can be conditionally decomposed into three parts. During the first $1.25$ $\mathsf{s}$ we have a stage of slow motion, which is followed by a rapid acceleration and, during the last $0.75$ $\mathsf{s}$, we can observe a backward motion of the piston before it is frozen in its final point. The wave has $T - T_f = 4$ $\mathsf{s}$ to evolve before its \emph{quality} is estimated according the functional $J(x_0, \zeta_0)$.

\begin{figure}
  \centering
  \subfigure[Free surface elevation]%
  {\includegraphics[width=0.69\textwidth]{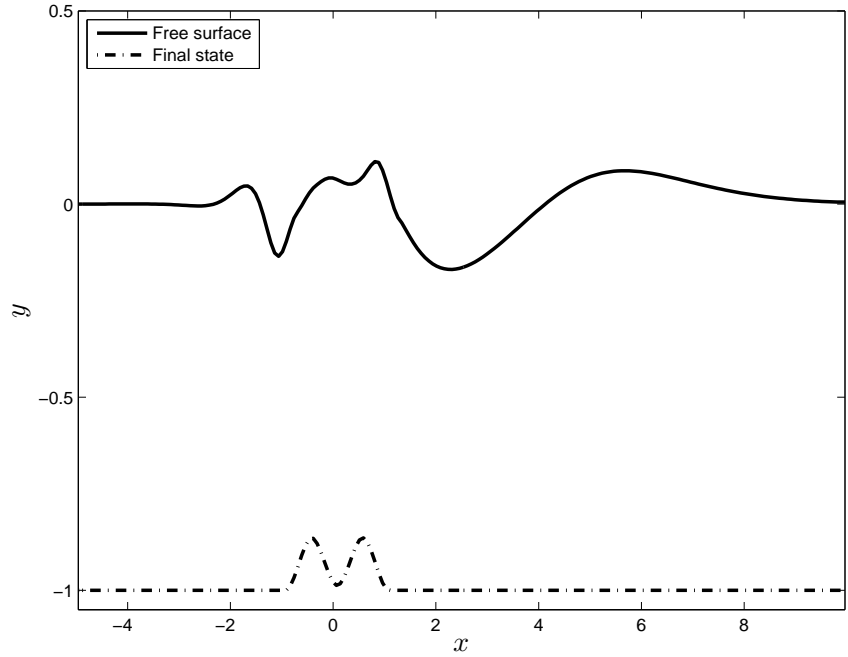}}
  \subfigure[Piston shape]%
  {\includegraphics[width=0.49\textwidth]{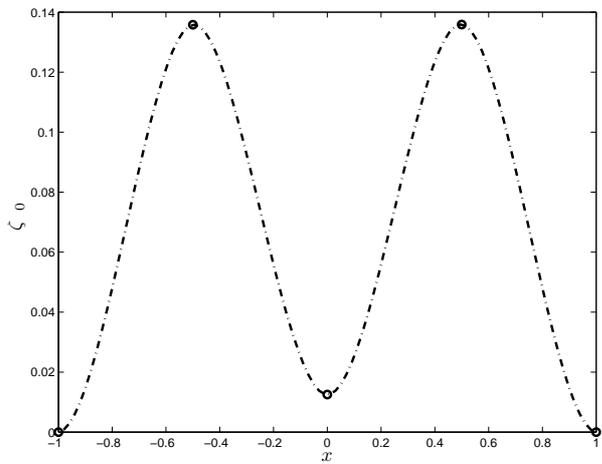}}
  \subfigure[Piston trajectory]%
  {\includegraphics[width=0.49\textwidth]{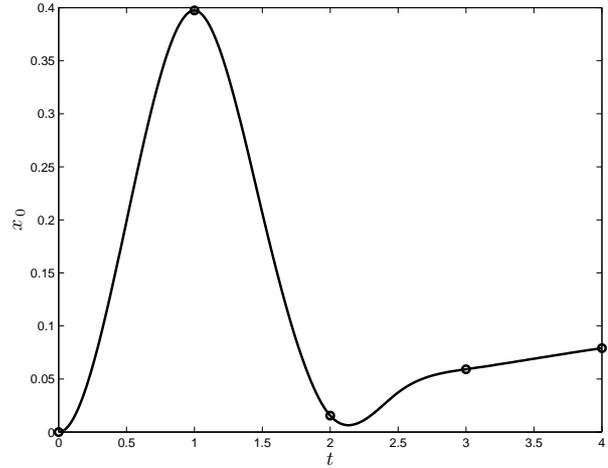}}
  \caption{\small\em Numerical computation of the optimal piston shape and its trajectory for  the functional $J(x_0, \zeta_0)$.}
  \label{fig:J1}
\end{figure}

Since the choice of the functional to minimize is far from being unique, we decided to perform some additional tests. Instead of maximizing the wave height, one can try to maximize, for example, the wave steepness in a given portion of the wave tank. In other words, we will minimize the following functional (subject to the same constraints as above):
\begin{equation*}
  J_1(x_0, \zeta_0) = -\int\limits_\I \eta_x(x,T)\ud x.
\end{equation*}
The result of the numerical optimization procedure is shown on Figure \ref{fig:J2}. One can see on the free surface snapshot \ref{fig:J2}(a) that effectively the wave became steeper. The optimal shape of the wavemaker is almost the same as for the functional $J(x_0, \zeta_0)$. However, the piston trajectory is almost monotonic and close to the uniform motion. This solution might be easier to realize in practice.

\begin{figure}
  \centering
  \subfigure[Free surface elevation]%
  {\includegraphics[width=0.69\textwidth]{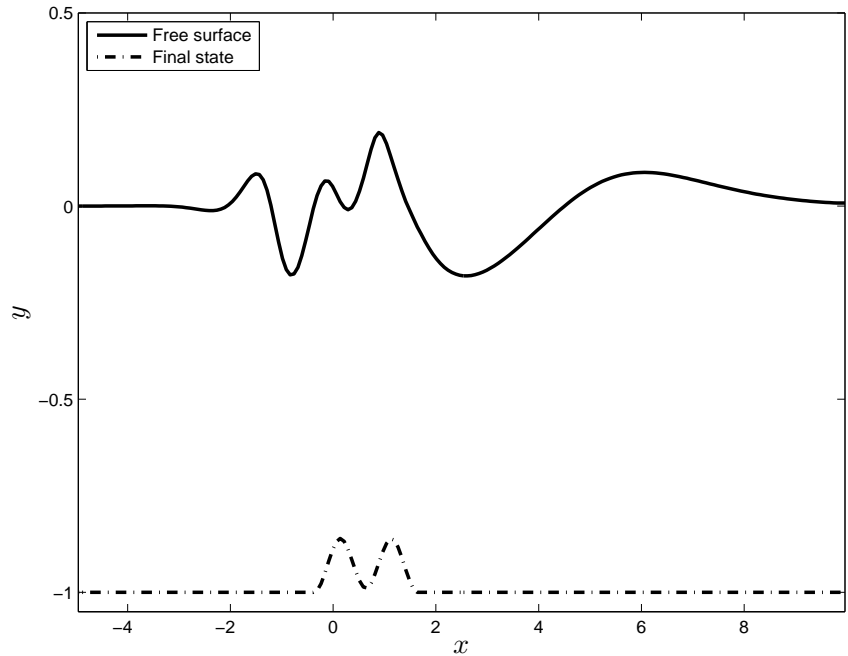}}
  \subfigure[Piston shape]%
  {\includegraphics[width=0.49\textwidth]{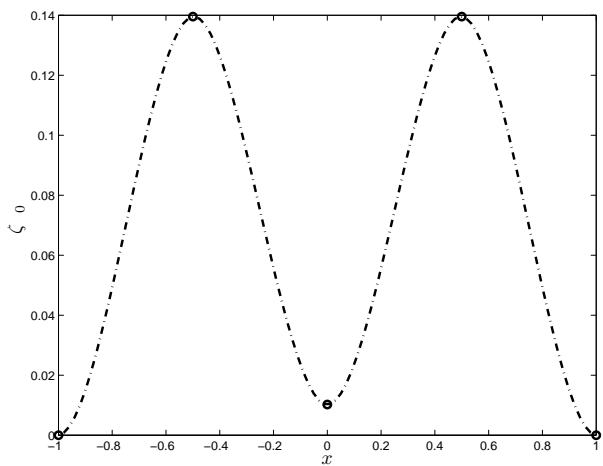}}
  \subfigure[Piston trajectory]%
  {\includegraphics[width=0.49\textwidth]{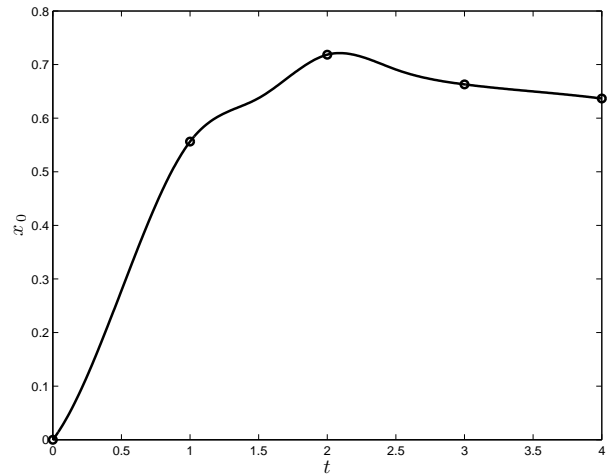}}
  \caption{\small\em Numerical computation of the optimal piston shape and its trajectory for  the functional $J_1(x_0, \zeta_0)$.}
  \label{fig:J2}
\end{figure}

We can also simply minimize the mismatch between the obtained solution and a fixed desired wave profile:
\begin{equation*}
  J_2(x_0, \zeta_0) = \int\limits_\I \left(\eta(x,T) - \eta_T(x)\right)^2\ud x,
\end{equation*}
where $\eta_T(x)$ is a given function on the interval $\I$. 
 
To illustrate this concept, in numerical computations we take the $\N$-wave ansatz put forward by S.~\textsc{Tadepalli} \& C.~\textsc{Synolakis} (1994, 1996) \cite{TS94, Tadepalli1996}:
\begin{equation*}
  \eta_T^{(1)}(x) = (x-x_m)\sech^2(x-x_m), \qquad
  \eta_T^{(2)}(x) = -(x-x_m)\sech^2(x-x_m).
\end{equation*}
The first profile $\eta_T^{(1)}(x)$ corresponds to the leading elevation $\N$-wave solution (LEN), while the second function $\eta_T^{(2)}(x)$ is a typical leading depression $\N$-wave (LDN). The results of optimization procedures are shown on Figures \ref{fig:J3a} and \ref{fig:J3b}. One can notice that the resulting optimal shapes of the wavemaker are completely different (see Figures \ref{fig:J3a}(b) and \ref{fig:J3b}(b)). For the surfing applications the LDN wave might be more interesting. It requires also more uniform piston motion comparing to the LEN wave (see Figures \ref{fig:J3a}(c) and \ref{fig:J3b}(c)).
 
In the final experiment  the target state is the solitary wave for gBBM:
 \begin{equation*}
  \eta_T^{(3)}(x) = 2(c-1)\sech^2( \frac{\sqrt{1-c^{-1} }}{2} |x-x_m|).
  \end{equation*}
As one can notice we can find the shape of the wavemaker which generates waves close to the solitary waves for gBBM  see Figures \ref{fig:J5}).
\begin{figure}
  \centering
  \subfigure[Free surface elevation]%
  {\includegraphics[width=0.69\textwidth]{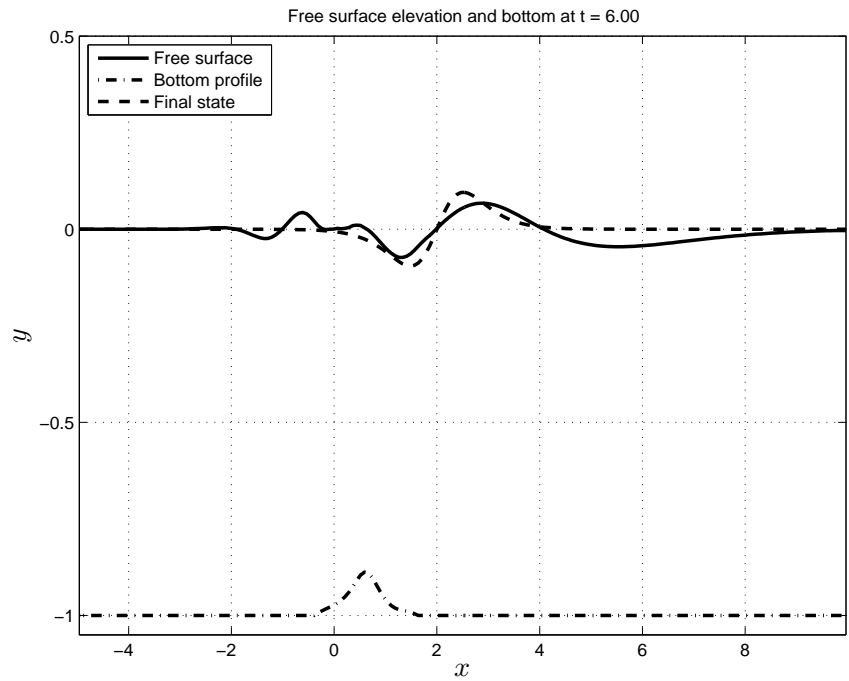}}
  \subfigure[Piston shape]%
  {\includegraphics[width=0.49\textwidth]{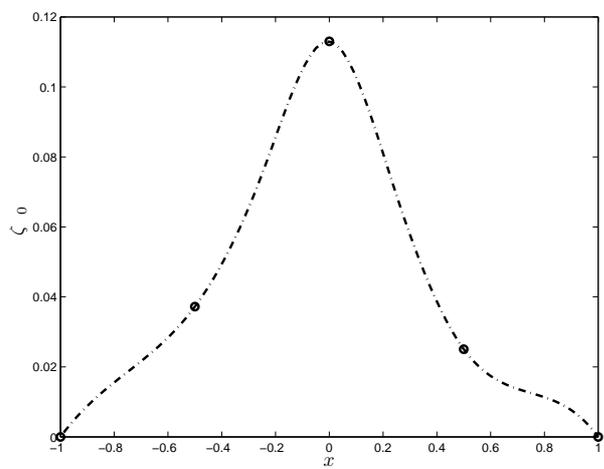}}
  \subfigure[Piston trajectory]%
  {\includegraphics[width=0.49\textwidth]{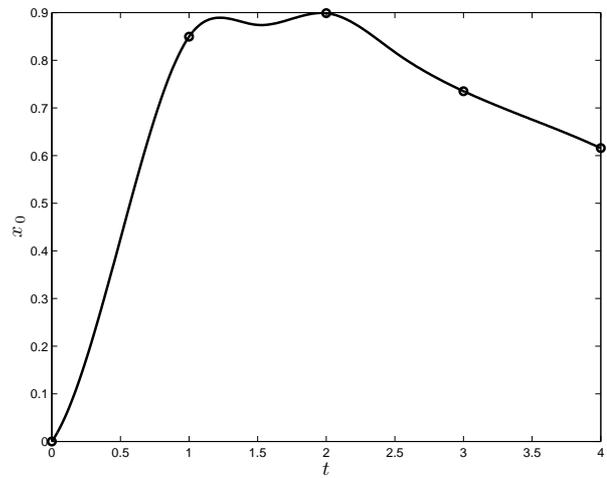}}
  \caption{\small\em Numerical computation of the optimal piston shape and its trajectory for  the functional $J_2(x_0, \zeta_0)$ and the terminal state $\eta_T^{(1)}(x) = (x-x_m)\sech^2(x-x_m)$.}
  \label{fig:J3a}
\end{figure}

\begin{remark}
The arguments used to prove Theorem \ref{thm:opt} can be also applied to show the existence of minimizers for the functionals $J_1(x_0,\zeta_0)$ and $J_2(x_0,\zeta_0)$.
\end{remark}

\begin{figure}
  \centering
  \subfigure[Free surface elevation]%
  {\includegraphics[width=0.69\textwidth]{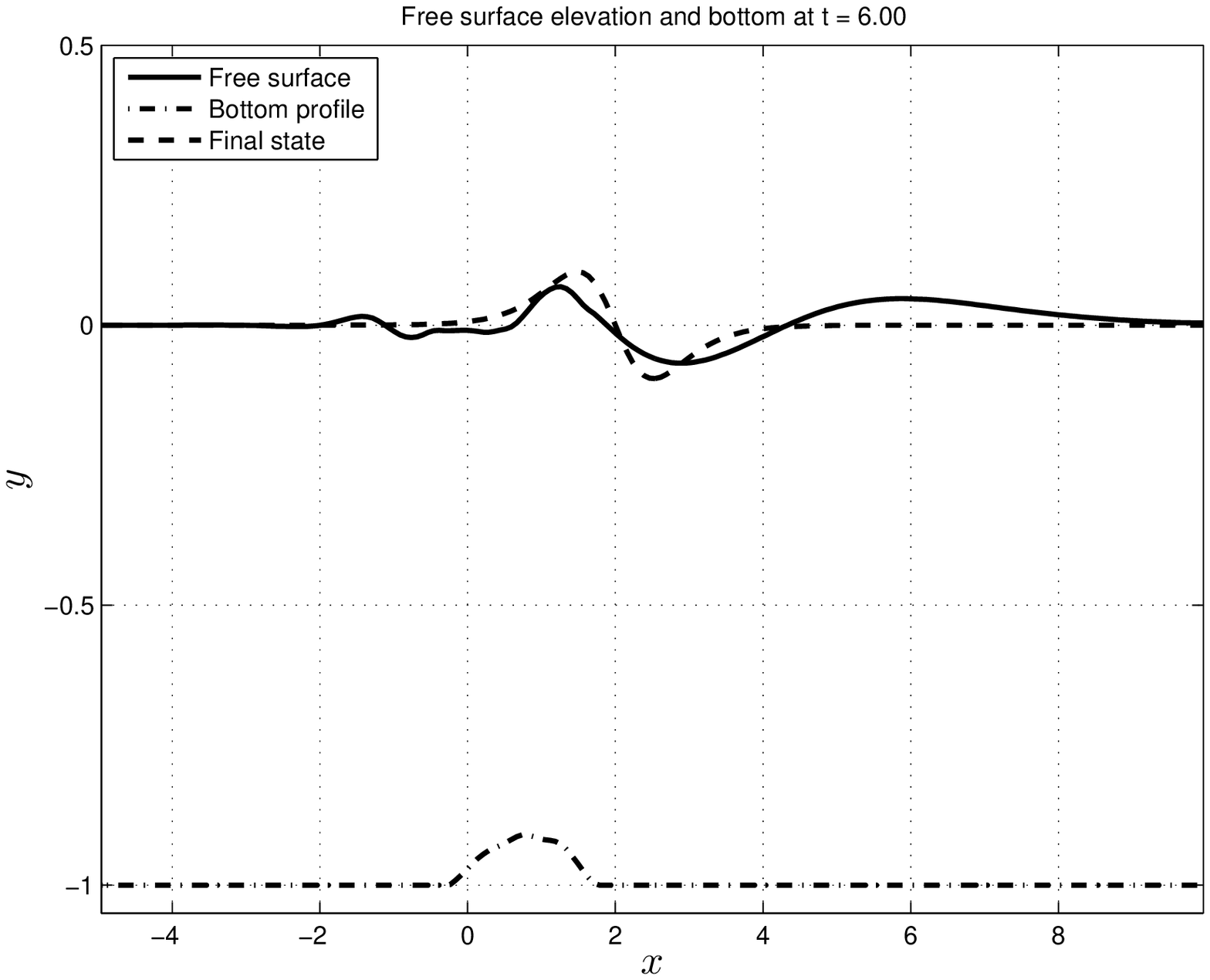}}
  \subfigure[Piston shape]%
  {\includegraphics[width=0.49\textwidth]{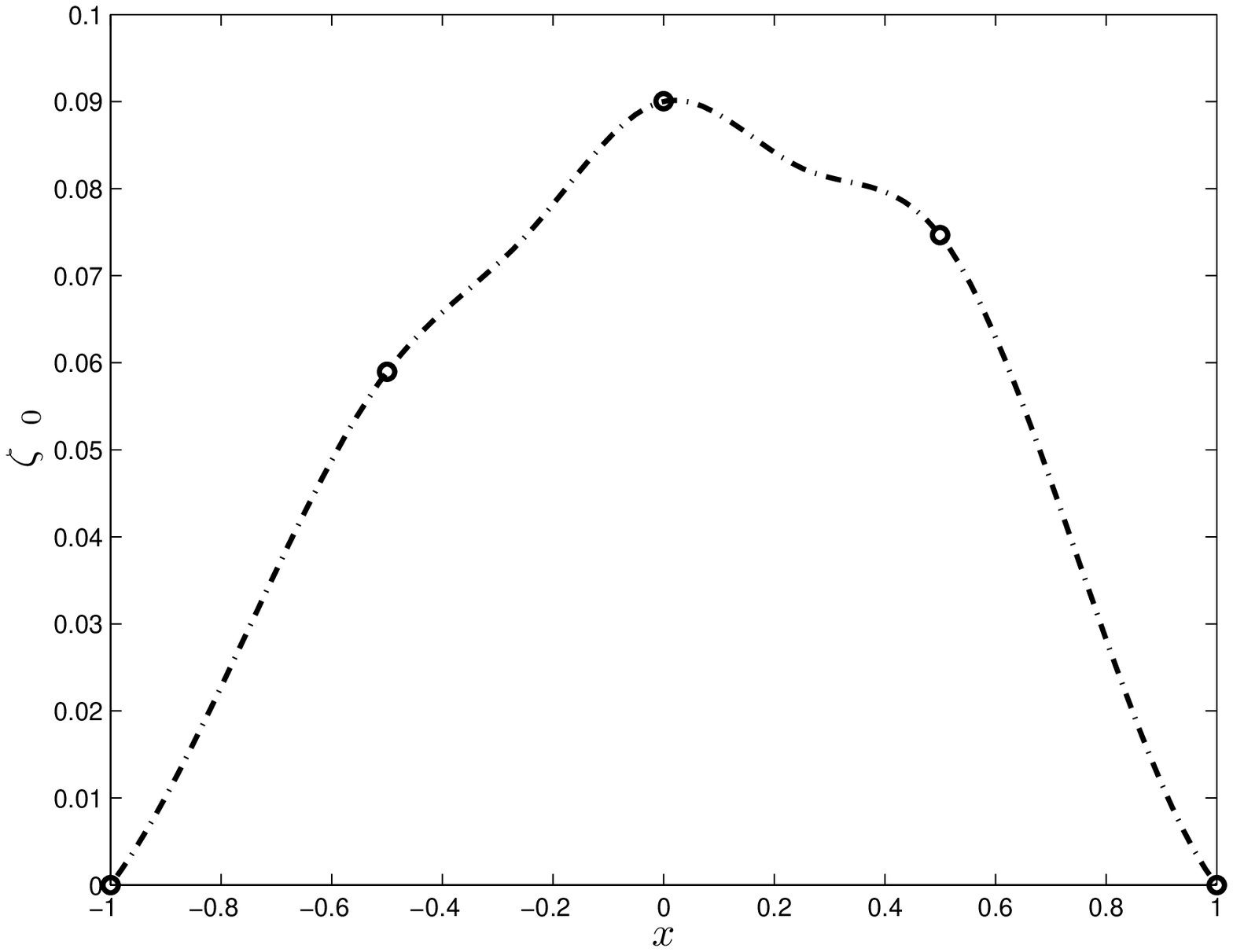}}
  \subfigure[Piston trajectory]%
  {\includegraphics[width=0.49\textwidth]{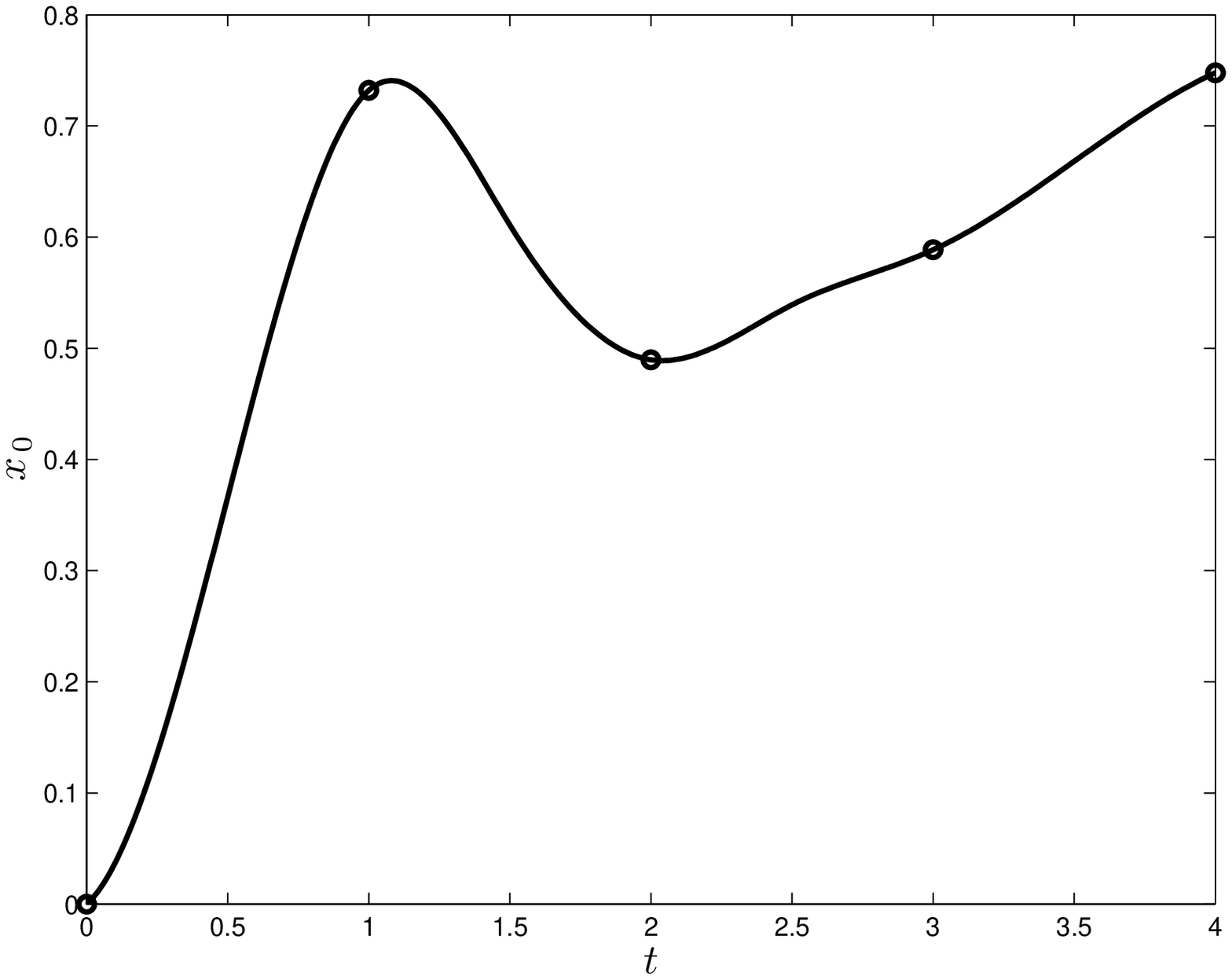}}
  \caption{\small\em Numerical computation of the optimal piston shape and its trajectory for  the functional $J_2(x_0, \zeta_0)$ and the terminal state $\eta_T^{(1)}(x) = -(x-x_m)\sech^2(x-x_m)$.}
  \label{fig:J3b}
\end{figure}

\begin{figure}
  \centering
  \subfigure[Free surface elevation]%
  {\includegraphics[width=0.69\textwidth]{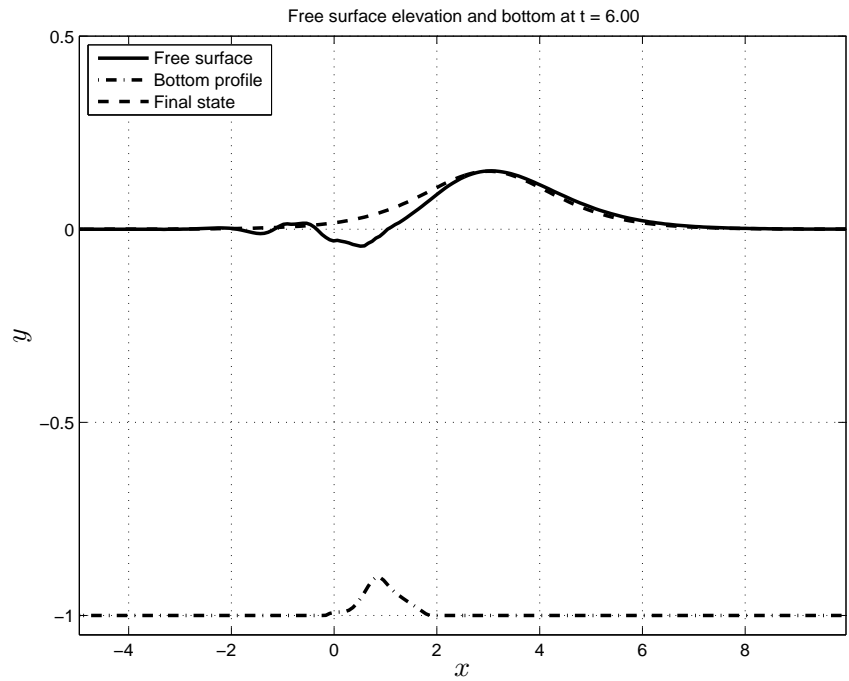}}
  \subfigure[Piston shape]%
  {\includegraphics[width=0.49\textwidth]{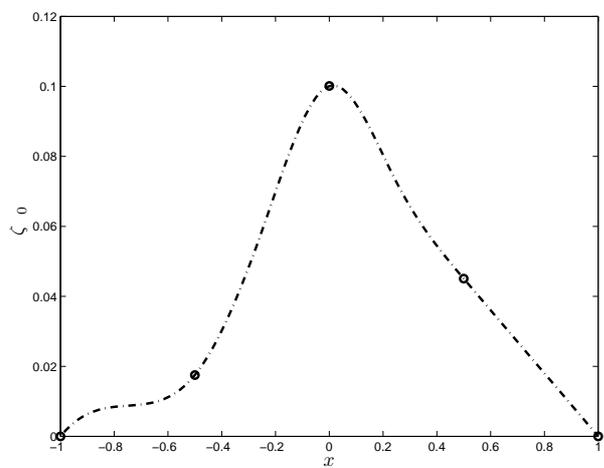}}
  \subfigure[Piston trajectory]%
  {\includegraphics[width=0.49\textwidth]{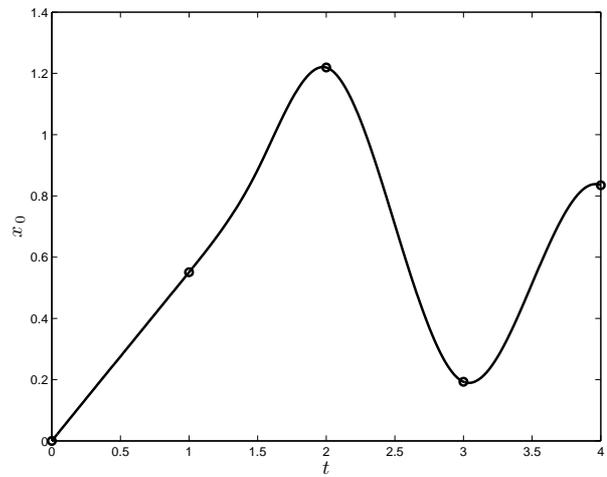}}
  \caption{\small\em Numerical computation of the optimal piston shape and its trajectory for  the functional $J_2(x_0, \zeta_0)$ and the terminal state $\eta_T^{(2)}(x) = (c-1)\sech^2(\frac{\sqrt{1-c^{-1} }}{2} |x-x_m|)$.}
  \label{fig:J5}
\end{figure}

\section{Conclusions}\label{sec:concl}

In the present work we considered the water wave generation problem by disturbances moving along the bottom. This problem has many important applications going even to the design of artificial surfing facilities \cite{InstantSport2012}. In order to study the formation of water waves due to the motion of the underwater piston, we derived a generalized forced BBM (gBBM) equation. The existence and uniqueness of its solutions were rigorously established. The trajectory of the piston was determined as the solution of a thoroughly formulated optimization problem. The existence of minimizers was also proven. Finally, the theoretical developments of this study were illustrated with numerical examples where we solve several constrained optimization problems with various forms of the cost functional. The resulting solutions were compared and discussed.

In future studies this problem will be addressed in the context of more complete bidirectional wave propagation models of Boussinesq-type \cite{BC, DMS2, Mitsotakis2007, DMII}. The optimization algorithm can be also further improved by evaluating the gradients analytically, for example. From the physical point of view, one may want to include some weak dissipative effects for more realistic wave description \cite{Dutykh2007}.
 
\section*{Acknowledgements}

H.~\textsc{Nersisyan} and E.~\textsc{Zuazua} were supported by the project ERC -- AdG FP7-246775 NUMERIWAVES, the Grant PI2010-04 of the Basque Government, the ESF Research Networking Program OPTPDE and Grant  MTM2011-29306 of the MINECO, Spain. D.~\textsc{Dutykh} acknowledges the support from ERC under the research project ERC-2011-AdG 290562-MULTIWAVE. Also he would like to acknowledge the hospitality and support of the Basque Center for Applied Mathematics (BCAM) during his visits.
This work was finished while the third author was visiting the ``Laboratoire Jacques Louis Lions'' with the support of the program ``Research in Paris''.

\bibliography{biblio}
\bibliographystyle{abbrv}

\end{document}